\DeclarePairedDelimiter{\ceil}{\lceil}{\rceil}
\title{Thinking Out of the Box: Hybrid SAT Solving by Unconstrained Continuous Optimization} 
\author{Zhiwei Zhang}{Department of Computer Science, Rice University}{zhiwei@rice.edu}{}{}
\author{Samy Wu Fung}{Department of Applied Mathematics and Statistics, Colorado School of Mines}{swufung@mines.edu}{}{}
\author{Anastasios Kyrillidis}{Department of Computer Science, Rice University}{anastasios@rice.edu}{}{}
\author{Stanley Osher}{Department of Mathematics, University of California, Los Angeles}{sjo@math.ucla.edu}{}{}
\author{Moshe Y. Vardi}{Department of Computer Science, Rice University}{vardi@rice.edu}{}{}
\authorrunning{Z. Zhang, S. Wu Fung, A. Kyrillidis, S. Osher, M. Y. Vardi}
\keywords{SAT solving, Optimization, Fourier analysis on Boolean functions, Unconstrained Penalty}
\begin{document}

\maketitle

\begin{abstract}

The Boolean satisfiability (SAT) problem lies at the core of many applications in combinatorial optimization, software verification, cryptography, and machine learning. While state-of-the-art solvers have demonstrated high efficiency in handling conjunctive normal form (CNF) formulas, numerous applications require non-CNF (hybrid) constraints, such as XOR, cardinality, and Not-All-Equal constraints. Recent work leverages polynomial representations to represent such hybrid constraints, but it relies on box constraints that can limit the use of powerful unconstrained optimizers. In this paper, we propose unconstrained continuous optimization formulations for hybrid SAT solving by penalty terms. We provide theoretical insights into when these penalty terms are necessary and demonstrate empirically that unconstrained optimizers (e.g., Adam) can enhance SAT solving on hybrid benchmarks. Our results highlight the potential of combining continuous optimization and machine-learning-based methods for effective hybrid SAT solving.
\end{abstract}

\section{Introduction}
The Boolean satisfiability (SAT) problem \cite{Vardi14a} asks whether there exists a solution that satisfies all constraints in a given set of Boolean constraints. This fundamental problem holds immense significance in computer science with applications spanning combinatorial optimization \cite{horbach2010boolean}, software verification \cite{SATSolvingVerification}, probabilistic inference \cite{SATprobabilisticInference}, mathematical conjecture proving \cite{heule2016solving}, machine learning~\cite{katz2017reluplex}, and quantum computing~\cite{soeken2020boolean,vardi2023solving}. While SAT is known to be NP-complete, recent decades have witnessed remarkable advances in SAT solver technology \cite{Vardi14a} for both CDCL-based complete solvers ~\cite{silva1996grasp, bayardo1997using} and heuristic-search incomplete solvers \cite{GSAT}. 

The landscape of complete SAT solvers is dominated by Conflict-Driven Clause Learning (CDCL)~\cite{silva1996grasp, bayardo1997using} approaches, which evolved from the seminal GRASP algorithm \cite{marques1999grasp}. CDCL represents a significant advancement over the earlier backtracking Davis-Putnam-Logemann-Loveland (DPLL) algorithm \cite{davis1962machine}. This paradigm has spawned numerous high-performance implementations, including groundbreaking tools like Chaff \cite{chaff_paper}, MiniSat \cite{minisat}, and Glucose \cite{glucose}, alongside more contemporary developments such as MapleSAT \cite{maplesat} and Kissat \cite{fleury2020cadical}. The success of CDCL-based methods has established them as the predominant approach in SAT-solving research. Incomplete SAT solvers primarily employ discrete local search (DLS) \cite{hoos2000local} and message passing (MP) \cite{braunstein2005survey}, both optimizing to minimize unsatisfied constraints. Important incomplete solvers include GSAT \cite{GSAT}, WSAT \cite{walksat}, and newer innovations like TaSSAT~\cite{chowdhury2024tassat}, WalkSATlm \cite{Walksat-implementation}, probSAT \cite{probSAT}, Dimetheus \cite{kader2017novel}, and Sparrow \cite{sparrow}. Despite lacking comprehensive guarantees, incomplete solvers are highly efficient for random and crafted problem classes \cite{GSAT}.

While practical SAT solvers have achieved remarkable success, most of them are limited to solve a specific problem format named conjunctive normal form (CNF)\cite{schuler2005algorithm}, yet many applications require more complex logic. Neural network verification needs pseudo-Boolean constraints \cite{paredes2019principled}, while cryptography relies on XOR constraints \cite{Biclique-Cryptanalysis-of-the-Full-AES}  as well as cardinality constraints (\texttt{CARD}) and Not-all-equal (\texttt{NAE}) constraints in discrete optimization \cite{Graph-coloring-with-cardinality-constraints,nae-coloring}. Approaches for handling non-CNF constraints include \texttt{CNF} encodings \cite{cnfencodingofXOR,pblib} (which vary in effectiveness based on characteristics \cite{encoding-handbook-of-satisfiability}) and direct solver extensions like CryptoMiniSAT \cite{cmspaper} for \texttt{XOR} constraints. These constraint-specific solutions highlight the need for unified approaches to handling diverse hybrid constraints. Using hybrid constraints makes the SAT formula significantly more succinct and readable in those applications during the encoding phase. As for solving the hybrid formula, while CNF encoding \cite{prestwich2009cnf} and extensions \cite{yang2021engineering} to CNF SAT solvers can accept hybrid constraints, those methods do not natively handle hybrid constraints and can often be inefficient. 

 A promising research direction has emerged that approaches SAT solving with hybrid constraints by continuous optimization.
 This idea was pioneered in the 1990s~\cite{gu1999optimizing} using polynomial representation for CNF formulas, though this early work was limited to CNF format exclusively. Recently, a general approach named FourierSAT~\cite{fouriersat,kyrillidis2021solving,msthesis} transforms multiple types of hybrid constraints into polynomials via the Walsh-Hadamard-Fourier expansion. This formulation recasts the SAT problem as a constrained optimization problem with a polynomial objective and box constraints in $[-1,1]^n$. Notably, solutions to the original SAT problem correspond to global optima in this optimization formulation.

FourierSAT's core advantage lies in its ability to handle non-CNF formulas effectively. Additionally, it offers interesting theoretical properties through analysis of polynomials and continuous optimizers \cite{nocedal2006numerical}. Subsequent research has enhanced gradient computation efficiency through specialized data structures \cite{gradsatAAAI} and GPU acceleration \cite{cen2023massively}. Continuous-optimization-based SAT solvers are particularly promising due to their ability to leverage advances from machine learning optimization techniques, which can be further accelerated using specialized hardware like GPUs and TPUs \cite{nikolic2022survey}.


However, a potential limitation of FourierSAT stems from its box constraints, particularly when rapid solutions are required. These constraints restrict the application of powerful optimization algorithms such as second-order methods~\cite{nocedal2006numerical} and unconstrained global optimization methods~\cite{heaton2024global}. 
For instance, projected Newton methods using Euclidean projection encounter convergence challenges~\cite{gafni1984two, bertsekas1982projected, kan2021pnkh}. 

\noindent \textbf{Contribution}. We propose an unconstrained formulation of FourierSAT and explore its potential for hybrid SAT solving. Our approach transforms the constrained optimization problem by applying a non-negative operator to the polynomial representation of each constraint. These transformed constraints become additional penalty terms in our formulation and ensure the existence of a global minimum in $\mathbb{R}^n$ and a direct correspondence between solutions of the unconstrained problem and the original constrained problem. Our theoretical analysis establishes a necessary condition for omitting the penalty terms without sacrificing the soundness of the reduction. 

By implementing our methods within the FourierSAT codebase and evaluating them on an extensive benchmark set, we uncover several important insights: (1) the added penalty terms effectively help satisfy cardinality constraints while hindering the satisfaction of CNF and XOR constraints; (2) different formulations produce varying effects across benchmarks and solver configurations; and (3) adopting unconstrained optimizers such as Adam~\cite{kingma2014adam} significantly improves the performance of the virtual best solver.

This work enriches both the theoretical understanding and practical application of continuous-optimization-based SAT solving techniques. It opens up promising avenues for applying general machine-learning-inspired unconstrained optimizers to hybrid SAT solving.

\section{Notation and Preliminaries}
\label{section:pre}
\subsection{Boolean Formulas, Constraints}

\begin{definition}[Boolean constraints]
    Let $X=(x_1,...,x_n)$ be a sequence of $n$ Boolean variables.  An assignment is a vector $b\in\{\texttt{True},\texttt{False}\}^n$. A Boolean constraint  $c$ is a mapping from Boolean assignments $\{\texttt{True},\texttt{False}\}^n$ to $\{\texttt{True},\texttt{False}\}$. 
\end{definition}

\begin{definition}[Boolean formulas, SAT and MaxSAT]  A formula $f=c_1\wedge c_2\wedge \dots \wedge c_m$ is the conjunction of $m$ Boolean constraints. A solution to $f$ is an assignment that satisfies all its constraints. The Boolean Satisfiability problem (SAT) is to decide whether a formula is satisfiable. When the formula is unsatisfiable, the MaxSAT problem asks for the maximum number of constraints satisfied by an assignment.
\end{definition}

When all constraints in a formula are \texttt{OR}s, i.e., disjunctions of a set of variables or its negations, the formula is said to be in conjunctive normal form (CNF). While CNF is the most prevalent format in SAT solving, this work focuses on solving \emph{hybrid} constraints, i.e., constraints can be of multiple types such as \texttt{OR}, XOR, not-all-equal (NAE), cardinality, etc.
	
	\subsection{Walsh-Fourier Expansion of a Boolean Function}
	
    The Walsh-Fourier Transform is a method for transforming a Boolean function into a multilinear polynomial. The following theorem states that every function defined on a Boolean hyper-cube has an equivalent polynomial representation.  

\begin{definition} A multilinear polynomial is a multivariate polynomial that is linear in each of its variables, i.e., each monomial is a constant times a product of distinct variables.
\end{definition}
For example, $f(x,y,z)=3xy+4y-5z$ is multilinear, but $f(x,y,z)=x^2+3y$ is not.

	\begin{theorem}[Walsh-Fourier expansion] \rm{\cite{O'Donnell:2014:ABF:2683783}} \label{FourierTransformation} Given a function $f: \{\pm 1\}^n \to \mathbb{R}$, there is a unique way of expressing $f$ as a multilinear polynomial with at most $2^n$ terms, where each term corresponds to one subset of $[n]$, according to: $	f(x) = \sum_{S\subseteq [n]} \left( \widehat{f}(S) \cdot \prod_{i\in S}x_i \right),$
		where $\widehat{f}(S)\in \mathbb{R}$ is called a Fourier coefficient on $S$, and computed as:
		       $ \widehat{f}(S) = \underset{x\sim \{\pm 1\}^n}{\mathbb{E}} \left[f(x) \cdot \prod_{i\in S}x_i \right] 
		        = \frac{1}{2^n} \!\!\! \sum_{x\in \{\pm 1\}^n} \left(f(x) \cdot \prod_{i\in S}x_i\right)$,
		where $x\sim \{\pm 1\}^n$ indicates $x$ is uniformly sampled from $\{\pm 1\}^n$.
		The polynomial is called the \textbf{Walsh-Fourier expansion} of $f$. 
	\end{theorem}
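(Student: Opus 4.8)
The plan is to establish both existence and uniqueness simultaneously by exhibiting the monomials $\chi_S(x) := \prod_{i\in S} x_i$ as an orthonormal basis of the space $V$ of all real-valued functions on the cube $\{\pm 1\}^n$, taken with respect to the inner product $\langle f, g\rangle := \mathbb{E}_{x\sim\{\pm 1\}^n}[f(x)g(x)]$. First I would observe that $V$ is a real vector space of dimension exactly $2^n$: a function on $\{\pm 1\}^n$ is determined by, and may take arbitrary values at, each of its $2^n$ points, so the point-indicator functions already form a basis of size $2^n$. Each $\chi_S$ is manifestly multilinear, being a product of distinct variables each appearing to the first power, and there are exactly $2^n$ of them, one per subset $S\subseteq[n]$; this also accounts for the claimed bound of at most $2^n$ terms.

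The computational heart is the orthonormality claim $\langle \chi_S, \chi_T\rangle = [S=T]$. I would write $\chi_S(x)\chi_T(x) = \prod_{i\in S}x_i \prod_{j\in T} x_j$ and invoke the crucial identity $x_i^2 = 1$ on $\{\pm 1\}$, so that every coordinate appearing in both $S$ and $T$ cancels and the product collapses to $\chi_{S\triangle T}(x)$, where $S\triangle T$ is the symmetric difference. It then remains to evaluate $\mathbb{E}_x[\chi_U(x)]$ for $U = S\triangle T$. Since under uniform sampling the coordinates $x_i$ are independent with $\mathbb{E}[x_i]=0$, the expectation factorizes as $\prod_{i\in U}\mathbb{E}[x_i]$, which equals $1$ when $U=\emptyset$ (the empty product) and $0$ otherwise. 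Because $S\triangle T=\emptyset$ exactly when $S=T$, this yields the Kronecker delta and establishes orthonormality.

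From orthonormality the remainder follows formally. An orthonormal family is linearly independent, so the $2^n$ characters $\{\chi_S\}$ form a linearly independent set of the correct cardinality inside a $2^n$-dimensional space, hence a basis. Consequently every $f\in V$ admits an expansion $f = \sum_{S\subseteq[n]} \widehat{f}(S)\,\chi_S$ (existence) with coefficients that are uniquely determined (uniqueness). To recover the stated formula for $\widehat{f}(S)$, I would pair the expansion against a fixed $\chi_T$ and use orthonormality: $\langle f, \chi_T\rangle = \sum_S \widehat{f}(S)\langle \chi_S,\chi_T\rangle = \widehat{f}(T)$, which is precisely $\mathbb{E}_{x}[f(x)\prod_{i\in T}x_i]$, matching the asserted expression and its finite-sum form.

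This argument is standard and presents no serious obstacle; the one point requiring genuine care is the bookkeeping in the orthonormality step, namely correctly tracking which coordinates survive the collapse $x_i^2=1$ so that the product reduces cleanly to $\chi_{S\triangle T}$. The payoff of routing everything through orthonormality is a clean separation of concerns: the single orthonormality computation, combined with the dimension count, delivers existence, uniqueness, and the coefficient formula together, obviating the need for a separate interpolation argument.
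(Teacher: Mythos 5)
Your proof is correct. The paper does not prove this theorem itself---it is quoted from O'Donnell's \emph{Analysis of Boolean Functions}---and your argument (orthonormality of the characters $\chi_S$ under the inner product $\langle f,g\rangle = \mathbb{E}_{x}[f(x)g(x)]$, the collapse $\chi_S\chi_T = \chi_{S\triangle T}$ via $x_i^2=1$, and the dimension count giving a basis) is precisely the standard proof in that cited source, so there is nothing to compare against within the paper and no gap to report.
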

	
\begin{table}
    \centering
    \begin{small}
    \begin{tabular}{c c c}
        \toprule 
        $c$ & & $\texttt{FE}_c$ \\
        \cmidrule{1-1} \cmidrule{3-3}
        $x_1 \vee x_2$ & & $\displaystyle 1/4\Bigl(1 + x_1 + x_2 + x_1x_2\Bigr)$  \\ 
        $x_1 \oplus x_2 \oplus x_3$ & & $\displaystyle 1/2\Bigl(1 + x_1x_2x_3\Bigr)$  \\ 
        $\text{\texttt{NAE}}(x_1,x_2,x_3)$ & & $\displaystyle 1/4\Bigl(1 + x_1x_2 + x_2x_3 + x_1x_3\Bigr)$  \\ 
        \bottomrule
    \end{tabular} 
    \caption{Examples of Fourier expansions of \texttt{OR}, \texttt{XOR} and \texttt{Not-All-Equal} constraints, where $x_i\in\{+1,-1\}$, with $+1=$ false and $-1=$ true. \texttt{FE}$_c$ yields 0 when the constraint is satisfied and 1 otherwise.}
    \label{ex_FE}
    \end{small}
\end{table}
	
	For a Boolean constraint $c$, $\texttt{FE}_c:\{-1,1\}^n\to\{0,1\}$ denotes its Walsh-Fourier expansion. Table \ref{ex_FE} lists examples of Walsh-Fourier expansions for different types of constraints. Note that for the variable of variables, $-1$ stands for \texttt{True} and $1$ for False. For the value of the constraint, $0$ represents \texttt{True}, and $1$ represents \texttt{False} \footnote{Note that the numerization of the output in this paper ($0$ for \texttt{True} and $1$ for \texttt{False}) is different from the original FourierSAT paper \cite{fouriersat}. The numerization used in this paper can be understood as the cost indicator of a Boolean constraint. We believe this numerization is best for simplifying formulations and unifying constrained and unconstrained version of FourierSAT. Meanwhile, different numerization do not essentially affect the theoretical properties.}.

\section{Formulations of SAT by Continuous Optimization }
In FourierSAT \cite{fouriersat}, hybrid Boolean SAT/MaxSAT problems are converted to constrained continuous optimizations. In this section, we first recap the constrained formulation in FourierSAT with box constraint $x\in [-1,1]^n$ and then propose unconstrained formulations.

\subsection{Constrained formulation}
\begin{definition}[Constrained formulation]
Given a formula $f$ with constraint set $C$, the constrained (linear) formulation is defined as $$F^{lin}_C(x)= \sum_{c\in C} \texttt{FE}_c(x).$$
\end{definition}

The following proposition gives the correctness of FourierSAT. We define a sign function on the domain of real vectors, $\texttt{sgn}:\mathbb{R}^n\to \{-1,1\}^n$ by
$
\texttt{sgn}(l)_i =
\begin{cases}
-1 & \text{if } l_i < 0, \\
\phantom{-}1 & \text{if } l_i \ge 0,
\quad \text{for } i = 1, \dots, n.
\end{cases}
$

\begin{proposition} \rm{\cite{fouriersat}}
A hybrid formula $f$ is satisfiable iff  $$\min_{x\in[-1,1]^n}F_{C}^{lin}(x) = 0.$$ Moreover if $l\in [-1,1]^n$ and $F_{C}^{lin}(l)=0$, then \texttt{sgn}$(l)$ satisfies $f$.
\label{prop:fouriersat}
\end{proposition}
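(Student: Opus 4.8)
The plan is to exploit the \emph{multilinearity} of the objective and reduce everything to the behavior of $F_C^{lin}$ at the vertices of the box. First I would observe that $F_C^{lin}$ is a sum of Walsh-Fourier expansions, each multilinear by \Cref{FourierTransformation}, and that a sum of multilinear polynomials is again multilinear. The single structural fact driving the argument is the \emph{extremal principle for multilinear functions}: such a polynomial is affine in each coordinate when the others are held fixed, so its restriction to any axis-parallel segment of $[-1,1]^n$ is affine and attains its extreme values at the endpoints. Sweeping over the coordinates one at a time, I would conclude that both the minimum and the maximum of $F_C^{lin}$ over $[-1,1]^n$ are attained at a vertex, i.e. at a Boolean point $b\in\{-1,1\}^n$.

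With this in hand the ``iff'' is almost immediate. On any Boolean point $b$ we have $F_C^{lin}(b)=\sum_{c\in C}\texttt{FE}_c(b)$, and since each $\texttt{FE}_c(b)\in\{0,1\}$ equals $1$ exactly when $b$ violates $c$, the value $F_C^{lin}(b)$ equals the number of constraints falsified by $b$. In particular $F_C^{lin}\ge 0$ at every vertex, so by the extremal principle $F_C^{lin}\ge 0$ on the whole box. If $f$ is satisfiable, a satisfying assignment $b$ gives $F_C^{lin}(b)=0$, which is therefore the minimum; conversely, if the minimum is $0$, the extremal principle supplies a minimizing vertex $b$ with $F_C^{lin}(b)=0$, i.e. a $b$ falsifying no constraint, so $f$ is satisfiable.

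For the ``moreover'' statement I would prove the slightly stronger rounding claim that $l\in[-1,1]^n$ with $F_C^{lin}(l)=0$ implies $F_C^{lin}(\texttt{sgn}(l))=0$; satisfaction of $f$ by $\texttt{sgn}(l)$ then follows since $\texttt{sgn}(l)$ is Boolean and a Boolean point of value $0$ falsifies nothing. Because $F_C^{lin}\ge 0$ on the box, such an $l$ is a global minimizer. I would round $l$ coordinate by coordinate, maintaining the invariant that the current point is still a minimizer of value $0$. To round coordinate $i$, consider the affine function $h(t)$ obtained by freezing the other coordinates; it satisfies $h\ge 0$ on $[-1,1]$ and attains its minimum $0$ at $t=l_i$. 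If $h$ is constant, the value is unchanged when $l_i$ is replaced by $\texttt{sgn}(l_i)$; otherwise the minimizer of a nonconstant affine function over $[-1,1]$ is the unique relevant endpoint, forcing $l_i\in\{-1,1\}$, which $\texttt{sgn}$ leaves fixed. Either way the value stays $0$, and after all $n$ steps the point is exactly $\texttt{sgn}(l)$.

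The steps involving the Fourier machinery and the vertex evaluation are routine; the main obstacle, and really the only place needing care, is the rounding argument for the ``moreover'' part. The delicate case is a coordinate in which $F_C^{lin}$ is \emph{constant}, where $l_i$ may be non-integral: here one must verify that sign-rounding does not raise the objective (it cannot, precisely because the objective is insensitive to that coordinate) and that freezing already-rounded coordinates does not break the nonnegativity used in later steps, which holds because restrictions of $F_C^{lin}$ remain multilinear and nonnegative on the box.
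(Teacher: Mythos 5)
Your proof is correct. Note that the paper itself does not prove Proposition~\ref{prop:fouriersat} --- it is imported from the FourierSAT paper \cite{fouriersat}, and the only related argument given here is the much simpler sum-of-squares proof of Proposition~\ref{prop:sq} --- so there is no in-paper proof to compare against; your argument (multilinearity of $F_C^{lin}$, vertex extremality over the box, vertex values counting violated constraints under this paper's $0/1$ convention, and the coordinate-wise rounding for the ``moreover'' part, including the constant-coordinate case) is exactly the standard argument from the cited source and is complete.
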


Proposition \ref{prop:fouriersat} is a special case of the following Proposition, which states that the solutions to a MaxSAT problem, i.e., discrete assignments with the minimum number of satisfied constraints, are also preserved by $F_C^{lin}$.

\begin{proposition} \rm{\cite{fouriersat} }If $k$ is the minimum number of violated constraints in $C$ by all Boolean assignments, then
$
\mathop{min}\limits_{x\in[-1,1]^n}F^{lin}_C(x) = k.
$  Moreover if $l\in [-1,1]^n$ and $F_{C}^{lin}(l)=k$, then \texttt{sgn}$(l)$ is the solution to the MaxSAT problem.
\label{prop:maxsat}
\end{proposition}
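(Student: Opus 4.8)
The plan is to exploit the defining property of the Walsh--Fourier expansion: each $\texttt{FE}_c$ is a \emph{multilinear} polynomial, so $F^{lin}_C=\sum_{c\in C}\texttt{FE}_c$ is multilinear as well and, in particular, is \emph{affine} in each single coordinate once the other coordinates are fixed. I would first establish the boundary principle that the minimum of $F^{lin}_C$ over the compact box $[-1,1]^n$ (which exists by Weierstrass, since a polynomial is continuous) is attained at a vertex of $\{-1,1\}^n$. Indeed, fixing all coordinates but the $i$-th reduces $F^{lin}_C$ to an affine map $t\mapsto a_i t+b_i$ on $[-1,1]$, whose minimum is at $-1$, at $+1$, or is constant (when $a_i=0$). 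Starting from any $x\in[-1,1]^n$ and rounding coordinates one at a time to a value-non-increasing endpoint therefore produces a Boolean point $b$ with $F^{lin}_C(b)\le F^{lin}_C(x)$, which gives $\min_{x\in[-1,1]^n}F^{lin}_C(x)=\min_{b\in\{-1,1\}^n}F^{lin}_C(b)$.

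Next I would evaluate $F^{lin}_C$ on Boolean points. By the interpretation recorded in Table~\ref{ex_FE}, for $b\in\{-1,1\}^n$ each term $\texttt{FE}_c(b)$ equals $0$ when $b$ satisfies $c$ and $1$ otherwise, so $F^{lin}_C(b)$ counts exactly the number of constraints violated by $b$. Minimizing over $b$ thus yields the minimum number of violated constraints, which is $k$ by definition; combined with the boundary principle this proves $\min_{x\in[-1,1]^n}F^{lin}_C(x)=k$.

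For the "moreover" claim I would sharpen the rounding into a coordinate-by-coordinate induction showing that the \emph{specific} map $\texttt{sgn}$ is optimal whenever $l$ is. The invariant is that the current point stays a global minimizer with value $k$. When processing coordinate $i$, optimality forces the $i$-th value to minimize an affine map $a_i t+b_i$ over $[-1,1]$: a strictly positive slope forces the value to $-1=\texttt{sgn}(l_i)$, a strictly negative slope forces it to $+1=\texttt{sgn}(l_i)$, and a zero slope lets me reset it to $\texttt{sgn}(l_i)$ without changing $F^{lin}_C$. In every case the coordinate ends up equal to $\texttt{sgn}(l_i)$ while the point remains a minimizer, so after $n$ steps $l$ has been transformed into the Boolean minimizer $\texttt{sgn}(l)$ with $F^{lin}_C(\texttt{sgn}(l))=k$. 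By the second step this means $\texttt{sgn}(l)$ violates exactly $k$ constraints and hence solves the MaxSAT problem.

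The step I expect to be most delicate is this last one: the generic boundary principle only guarantees that \emph{some} rounding of $l$ is optimal, whereas the statement demands that the fixed rounding $\texttt{sgn}$ works. The resolution is the sign/slope alignment above, and the one point to handle with care is that the slope $a_i$ in each coordinate is computed with the already-rounded coordinates in place; maintaining the invariant that the current point is still a global minimizer at every step is exactly what keeps the affine-minimization argument valid throughout.
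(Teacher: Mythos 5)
Your proof is correct. Note that the paper itself does not prove Proposition~\ref{prop:maxsat} --- it is imported from the FourierSAT paper \cite{fouriersat} and no proof appears in the appendix --- but your argument is exactly the standard one used there: multilinearity makes $F^{lin}_C$ affine in each coordinate, so the box minimum is attained at a vertex where the objective counts violated constraints, and the coordinate-wise sign/slope analysis (positive slope forces $l_i=-1$, negative forces $l_i=+1$, zero slope lets you reset to $\texttt{sgn}(l_i)$ for free) correctly upgrades ``some rounding is optimal'' to ``the rounding $\texttt{sgn}(l)$ is optimal,'' which is the only delicate point in the ``moreover'' clause.
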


\subsection{Unconstrained Formulations}
In this work, we extend the search space of the continuous optimization in FourierSAT from $[-1,1]^n$ to $\mathbb{R}^n$ with two modifications to the constrained formulations. First, non-negative operators are applied to $\texttt{FE}_c$ so that the value contributed by each constraint is lower-bounded. Second, penalty terms for the box constraints are used to force the optimizer converging to Boolean points.  For brevity, we leave all the proofs in the appendix.

\begin{definition}[Square formulation]
Given a formula $f$ with constraint set $C$, the square formulation w.r.t. $\alpha$ ($\alpha\ge 0$) is defined as $$F^{sq}_{C,\alpha}(x) = \sum_{c\in C} \texttt{FE}_c^2(x) + \alpha\cdot \sum_{i\in [n]}(x^2_i-1)^2.$$ 
\end{definition}



The square formulation provides a sound reduction from SAT to continuous optimization.
\begin{proposition}[Soundness]
A hybrid formula $f$ with constraint set $C$ is satisfiable iff $$\mathop{min}\limits_{x\in \mathbb{R}^n}F^{sq}_{C,\alpha>0}(x)=0.$$ Moreover if $l\in \mathbb{R}^n$ and $F_{C,\alpha}^{sq}(l)=0$, then \texttt{sgn}$(l)$ satisfies $f$.
\label{prop:sq}
\end{proposition}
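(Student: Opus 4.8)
The plan is to reduce everything to a single structural claim about the objective, namely that $F^{sq}_{C,\alpha}(x)\ge 0$ for all $x\in\mathbb{R}^n$, with equality precisely at Boolean points that satisfy every constraint in $C$. Both the equivalence and the \emph{moreover} part then follow quickly. First I would observe that $F^{sq}_{C,\alpha}$ is a sum of squares with coefficient $\alpha>0$, hence nonnegative, and that a finite sum of nonnegative terms vanishes iff each term vanishes. Applying this to the penalty $\alpha\sum_{i\in[n]}(x_i^2-1)^2$, vanishing forces $x_i^2=1$ for every $i$ (this is exactly where $\alpha>0$ is used), so $x\in\{-1,1\}^n$ is a Boolean point. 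Applying it to the constraint terms $\texttt{FE}_c^2(x)$ and using that at a Boolean point $\texttt{FE}_c(x)\in\{0,1\}$ (so $\texttt{FE}_c^2(x)=\texttt{FE}_c(x)$), vanishing of $\texttt{FE}_c^2(x)$ is equivalent to $\texttt{FE}_c(x)=0$, i.e.\ $c$ is satisfied. Combining the two, $F^{sq}_{C,\alpha}(x)=0$ iff $x$ is Boolean and satisfies all of $C$.

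Given this claim, the \emph{moreover} statement is immediate: if $F^{sq}_{C,\alpha}(l)=0$, then $l$ is Boolean and satisfies every constraint, and since $\texttt{sgn}$ fixes $\pm 1$ we have $\texttt{sgn}(l)=l$, so $\texttt{sgn}(l)$ satisfies $f$. Likewise the forward direction of the equivalence is short: if $f$ is satisfiable then some Boolean $b$ (read in $\{-1,1\}^n$) satisfies $C$, the claim gives $F^{sq}_{C,\alpha}(b)=0$, and nonnegativity promotes this to the global minimum.

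The only genuinely analytic point—and the step I expect to be the main, though mild, obstacle—is the converse: showing that $\min=0$ forces an actual satisfying Boolean minimizer, rather than an infimum that is approached but never realized for an unsatisfiable $f$. I would handle this by coercivity. Since the quartic penalty $\alpha\sum_{i\in[n]}(x_i^2-1)^2\to\infty$ as $\|x\|\to\infty$ (along any such sequence the largest coordinate diverges, using $\|x\|_\infty\ge \|x\|_2/\sqrt{n}$), the sublevel sets of $F^{sq}_{C,\alpha}$ are bounded and, by continuity, compact, so the minimum is attained on $\mathbb{R}^n$. Then $\min=0$ yields a point $l$ with $F^{sq}_{C,\alpha}(l)=0$, and the structural claim makes $l$ a satisfying Boolean assignment, so $f$ is satisfiable. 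I would close by flagging that $\alpha>0$ enters only to pin the zero set of the penalty to the Boolean cube; this is exactly the role examined by the paper's subsequent necessary condition for dropping the penalty terms.
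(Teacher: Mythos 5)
Your proposal is correct and follows essentially the same route as the paper's proof: nonnegativity of the sum of squares, the penalty term with $\alpha>0$ forcing $x\in\{-1,1\}^n$, and the vanishing of each $\texttt{FE}_c$ at a Boolean point certifying satisfaction. The only addition is your coercivity argument guaranteeing the minimum is attained, which the paper implicitly assumes by writing $\min$; this is a harmless and slightly more careful touch, not a different approach.
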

Alternatively, one can also use \texttt{abs} as non-negative operator.

 \begin{definition}[Absolute-value formulation]
   Given a formula $f$ with constraint set $C$, define the absolute-value formulation as $$F^{abs}_{C,\alpha}(x) = \sum_{c\in C} |\texttt{FE}_c(x)| + \alpha\cdot \sum_{i\in [n]}(x_i^2-1)^2.$$
 \end{definition}

Proposition \ref{prop:abs} can be proved similarly with Proposition \ref{prop:sq}.
\begin{proposition}[Soundness]
\label{prop:abs}
A hybrid formula $f$ is satisfiable iff $$\mathop{min}\limits_{x\in \mathbb{R}^n} F^{abs}_{C,\alpha>0}(x)=0.$$ Moreover if $l\in \mathbb{R}^n$ and $F_{C,\alpha}^{abs}(l)=0$, then \texttt{sgn}$(l)$ satisfies $f$.
\end{proposition}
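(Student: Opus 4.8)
The plan is to mirror the proof of Proposition~\ref{prop:sq}, exploiting the fact that the only property of the square operator used there is that $t\mapsto t^2$ is non-negative and vanishes exactly at $t=0$ — a property shared by $t\mapsto|t|$. Concretely, I would first observe that every summand of $F^{abs}_{C,\alpha}$ is non-negative: $|\texttt{FE}_c(x)|\ge 0$ for each constraint $c$, and $\alpha(x_i^2-1)^2\ge 0$ for each $i$ whenever $\alpha\ge 0$. Hence $F^{abs}_{C,\alpha}(x)\ge 0$ for all $x\in\mathbb{R}^n$, so any point at which the objective equals $0$ is automatically a global minimizer.

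For the forward direction, suppose $f$ is satisfiable and let $b\in\{-1,1\}^n$ be a satisfying Boolean assignment. Then $b_i^2=1$ for every $i$, so the penalty term vanishes, and since $b$ satisfies every $c$, the cost indicator gives $\texttt{FE}_c(b)=0$, hence $|\texttt{FE}_c(b)|=0$. Therefore $F^{abs}_{C,\alpha}(b)=0$, and combined with the lower bound above this yields $\min_{x\in\mathbb{R}^n}F^{abs}_{C,\alpha>0}(x)=0$, attained at $b$.

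For the converse and the ``moreover'' claim, I would start from any $l\in\mathbb{R}^n$ with $F^{abs}_{C,\alpha}(l)=0$. Because $F^{abs}_{C,\alpha}$ is a sum of non-negative terms totalling $0$, every term must individually vanish. From $\alpha(l_i^2-1)^2=0$ together with $\alpha>0$ — this is precisely where positivity of $\alpha$ is needed — I get $l_i^2=1$, i.e.\ $l\in\{-1,1\}^n$ is Boolean and $\texttt{sgn}(l)=l$. From $|\texttt{FE}_c(l)|=0$ I get $\texttt{FE}_c(l)=0$ for every $c$, which on a Boolean point means $l$ satisfies $c$; hence $\texttt{sgn}(l)=l$ satisfies $f$, proving both that $f$ is satisfiable and that $\texttt{sgn}(l)$ is a satisfying assignment.

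The one point requiring slightly more care — and the natural place to reuse the argument already invoked for Proposition~\ref{prop:sq} — is the attainment of the minimum over the unbounded domain $\mathbb{R}^n$. Here I would note that $F^{abs}_{C,\alpha}$ is coercive: whenever $\|x\|\to\infty$ some coordinate satisfies $|x_i|\to\infty$, so the penalty term $\alpha\sum_i(x_i^2-1)^2$ alone tends to $+\infty$, while the remaining terms $|\texttt{FE}_c|$ are non-negative; being continuous and coercive, the objective attains its infimum. The loss of smoothness at the kinks of $|\cdot|$ is irrelevant, since the soundness argument never differentiates the objective; this is the only substantive difference from the square formulation, and it works in our favor by leaving the reduction identical in structure.
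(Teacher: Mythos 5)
Your proof is correct and follows essentially the same route as the paper, which proves Proposition~\ref{prop:sq} by the sum-of-nonnegative-terms argument and simply notes that Proposition~\ref{prop:abs} is proved analogously. Your added coercivity remark justifying attainment of the minimum over $\mathbb{R}^n$ is a detail the paper's own proof glosses over, and it is a sound strengthening rather than a deviation.
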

We refer to $\alpha\cdot \sum_{i\in [n]}(x_i^2-1)^2$ as \textbf{penalty terms} that drag all variables to be either $1$ or $-1$. 

Above we propose general reductions from hybrid SAT to unconstrained continuous optimization. However, the reduction does not extend to MaxSAT problems, i.e., $\mathop{argmin}_{x\in \mathbb{R}^n}F_{C,\alpha}^{sq}$ typically cannot be extended into a valid MaxSAT solution.

\section{The Constraint Penalty Term: When is it Necessary?}

Unconstrained formulations proposed in the last section contain penalty terms given by $\sum_{i\in [N]}(x_i^2-1)^2$, which ensures the soundness of the reduction from SAT to unconstrained optimization. During optimization, the dynamics of the penalty terms drag the optimizer toward a Boolean assignment. 
Similar unconstrained formulations for CNF formulas, as a special case of hybrid formulas, were previously proposed in \cite{gu1999optimizing}. Notably, those formulations for CNF in \cite{gu1999optimizing} do not require penalty terms to establish a valid reduction from SAT to continuous optimization. This naturally raises the question: are penalty terms fundamentally necessary for the soundness of such reductions for general constraints? In this section, we answer the question above and provide a classification of commonly used constraint types based on whether penalty terms are required or can be omitted. We start by formalizing the types of constraints that do not need penalty terms by defining ``rounding friendly".

\begin{definition} (Rounding friendly) A Boolean constraint is called $\epsilon$-Rounding-friendly if there exists $\epsilon\ge 0$, such that $\forall l\in \mathbb{R}^n$, if $|\texttt{FE}_c(l)|\le \epsilon$,  then  $\texttt{sgn}(l)$ satisfies the constraint $c$.
\end{definition}
In other words, if a constraint is rounding friendly, then all real points that evaluate $\texttt{FE}_c$ to sufficiently small magnitude can be rounded to a satisfying assignment. If all constraints are rounding friendly, then the penalty term is not necessary for the soundness of the reduction from SAT to unconstrained optimization, as the following corollary states. 




\begin{corollary}
If all constraints of a formula $f$ are $\epsilon$-rounding-friendly for $\epsilon\ge 0$, then
$f$ is satisfiable iff $\mathop{min}\limits_{x\in\mathbb{R}^n} F^{sq/abs}_{C, \alpha=0}(x) = 0$. If $l\in \mathbb{R}^n$ and $F_{C,\alpha=0}^{sq/abs}(l)=0$, then $\texttt{sgn}(l)$ satisfies $f$.
\label{coro:rounding-friendly}
\end{corollary}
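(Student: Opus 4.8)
The plan is to isolate the one structural fact that does all the work and build both halves of the statement on top of it. With $\alpha=0$ both objectives are sums of non-negative terms, $F^{sq}_{C,\alpha=0}(x)=\sum_{c\in C}\texttt{FE}_c^2(x)$ and $F^{abs}_{C,\alpha=0}(x)=\sum_{c\in C}|\texttt{FE}_c(x)|$, so each is everywhere $\ge 0$, and for any $l\in\mathbb{R}^n$ the value $F^{sq/abs}_{C,\alpha=0}(l)$ equals $0$ precisely when every summand vanishes, i.e. $\texttt{FE}_c(l)=0$ for all $c\in C$. I would prove the second sentence of the corollary first, since it is the crux. Assuming $F^{sq/abs}_{C,\alpha=0}(l)=0$, the observation gives $|\texttt{FE}_c(l)|=0\le\epsilon$ for every $c$, and $\epsilon$-rounding-friendliness then yields that $\texttt{sgn}(l)$ satisfies each $c$, hence satisfies $f$. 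This is the single point at which the hypothesis enters, and it is exactly the guarantee that the penalty term $\alpha\cdot\sum_i(x_i^2-1)^2$ supplied in Propositions~\ref{prop:sq} and~\ref{prop:abs}.

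For the equivalence, the forward implication is the easy direction and does not use rounding-friendliness: if $f$ is satisfiable, pick a Boolean witness $b\in\{-1,1\}^n$; then $\texttt{FE}_c(b)=0$ for every $c$ by the definition of $\texttt{FE}_c$ (cf.\ Table~\ref{ex_FE}), so $F^{sq/abs}_{C,\alpha=0}(b)=0$, and since the objective is non-negative this value is the minimum, attained at $b$. For the reverse implication I would argue from a minimizing sequence rather than a single minimizer: assuming the minimum value is $0$, take $x_k$ with $F^{sq/abs}_{C,\alpha=0}(x_k)\to 0$; then each non-negative summand tends to $0$, so $|\texttt{FE}_c(x_k)|\to 0$ for every $c$, and since there are finitely many constraints we may choose $k$ with $|\texttt{FE}_c(x_k)|\le\epsilon$ for all $c$ simultaneously. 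Rounding-friendliness then makes $\texttt{sgn}(x_k)$ satisfy $f$, establishing satisfiability; when the infimum happens to be attained at some $l$, the extraction clause applied to $l$ gives the same conclusion.

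The step I expect to be delicate is attainment of the minimum on the non-compact domain $\mathbb{R}^n$: dropping the penalty term removes the $x_i^4$ growth that made $F^{sq/abs}_{C,\alpha>0}$ coercive, so a minimizer need not exist and ``$\min=0$'' must be read as an infimum. The minimizing-sequence argument above is designed precisely to sidestep this for the practically relevant regime $\epsilon>0$, where the slack condition $|\texttt{FE}_c|\le\epsilon$ is eventually met along any sequence driving the objective to $0$. The genuinely borderline case is $\epsilon=0$ with the infimum unattained, where one only obtains $|\texttt{FE}_c(x_k)|\to 0$ and cannot round a finite iterate; I would handle this either by reading ``$\min$'' as an attained minimum (consistent with the $\alpha>0$ propositions, and automatic in the satisfiable case where $b$ attains it), or by noting that for the concrete constraint families shown to be rounding-friendly the violating sign-orthants are closed sets on which $|\texttt{FE}_c|$ is bounded below by a positive constant, so $0$-rounding-friendliness already upgrades to $\epsilon$-rounding-friendliness for some $\epsilon>0$.
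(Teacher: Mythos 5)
Your proposal is correct and follows essentially the same route as the paper's proof: non-negativity of the summands forces $\texttt{FE}_c(l)=0$ for every $c$ at a zero of the objective, the forward direction reuses the Boolean-witness argument from Proposition~\ref{prop:sq}, and the reverse direction invokes rounding-friendliness to round $l$ to a satisfying assignment. Your extra care about attainment of the minimum (arguing via a minimizing sequence when $\epsilon>0$) addresses a point the paper's proof silently assumes by writing ``there exists $l$ with $F(l)=0$,'' but it does not change the substance of the argument.
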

Thus, to determine whether the penalty terms are necessary in the unconstrained setting of a Boolean formula, it suffices to check if all constraints are rounding-friendly. We investigate whether several types of commonly used hybrid constaints are rounding friendly, including  \texttt{OR} clause, XOR, cardinality constraint, NAE and pseodu-Boolean constraints. 



\begin{proposition} 
\label{prop:eg_roundingfirendly}
XOR, \texttt{OR} clauses, and NAE are $0$-rounding-friendly. Cardinality constraints and pseudo-Boolean constraints are not $\epsilon$-rounding-friendly for all $\epsilon\ge 0$. 
\end{proposition}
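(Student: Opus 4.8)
The plan is to read the definition of rounding-friendliness through the single equation $\texttt{FE}_c(l)=0$. Since $0\le\epsilon$ for every $\epsilon\ge0$, a constraint is $0$-rounding-friendly exactly when \emph{every} real zero of $\texttt{FE}_c$ rounds to a satisfying assignment, and it fails to be $\epsilon$-rounding-friendly for all $\epsilon\ge0$ as soon as \emph{one} real zero of $\texttt{FE}_c$ rounds to a violating assignment. Thus the positive half of the proposition reduces to analyzing the zero set of each closed-form expansion, and the negative half reduces to producing a single off-cube zero that rounds incorrectly. Throughout I use the convention $-1=\texttt{True}$, $+1=\texttt{False}$, so $\texttt{sgn}(l)_i=\mathrm{sign}(l_i)$ whenever $l_i\neq0$.

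For the positive cases I would exploit the structure of the three expansions. For an \texttt{OR} clause $\ell_1\vee\cdots\vee\ell_k$ the expansion factors as $\texttt{FE}_c(x)=\prod_i\tfrac12(1+\sigma_i x_{j_i})$, with $\sigma_i=+1$ for a positive literal and $\sigma_i=-1$ for a negated one (the $k=2$ case is the first row of Table~\ref{ex_FE}); a real zero forces some factor to vanish, i.e.\ $l_{j_i}=-\sigma_i\in\{\pm1\}$, which already makes $\texttt{sgn}(l)_{j_i}=-\sigma_i$ satisfy literal $\ell_i$, so the clause holds. For \texttt{XOR} the expansion is $\tfrac12(1\pm\prod_i x_{j_i})$, so a zero forces $\prod_i l_{j_i}=\mp1\neq0$; every coordinate is then nonzero and $\prod_i\texttt{sgn}(l)_{j_i}=\mathrm{sign}\bigl(\prod_i l_{j_i}\bigr)$ equals the required parity, so the \texttt{XOR} holds. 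For \texttt{NAE} I would use that only even-degree symmetric terms survive, giving $\texttt{FE}_c(l)=\tfrac1{2^{k}}\bigl(\prod_i(1+l_i)+\prod_i(1-l_i)\bigr)$ (matching the third row of Table~\ref{ex_FE} when $k=3$). If $\texttt{sgn}(l)$ were all-equal then all $l_i\ge0$ or all $l_i\le0$; using the elementary bound $|1-l_i|\le 1+|l_i|$ one gets $\bigl|\prod_i(1-l_i)\bigr|<\prod_i(1+|l_i|)=\prod_i(1+l_i)$, so the bracket is strictly positive and $\texttt{FE}_c(l)>0$, a contradiction. Hence any zero rounds to a not-all-equal assignment, which satisfies \texttt{NAE} (the all-negative case follows from the symmetry $\texttt{FE}_c(-l)=\texttt{FE}_c(l)$).

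For the negative cases I would exhibit an explicit witness. Matching values (cost $1$ when fewer than two coordinates equal $-1$, else $0$) determines the Fourier expansion of $\mathrm{Card}_{\ge2}(x_1,x_2,x_3)$, which by symmetry has the form $a+b\,e_1+c\,e_2+d\,e_3$ in the elementary symmetric polynomials and solves to $\texttt{FE}_c(x)=\tfrac12+\tfrac14(x_1+x_2+x_3)-\tfrac14 x_1x_2x_3$. Restricting to the diagonal $x=(t,t,t)$ yields $\tfrac14(2+3t-t^3)=-\tfrac14(t+1)^2(t-2)$, which vanishes at $t=2$. Therefore $l=(2,2,2)$ satisfies $\texttt{FE}_c(l)=0$ while $\texttt{sgn}(l)=(+1,+1,+1)$ encodes the all-false assignment, violating ``at least two true''; by the reduction above this kills $\epsilon$-rounding-friendliness for every $\epsilon\ge0$. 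Since this constraint is itself a pseudo-Boolean constraint (unit weights, threshold two), the same witness settles the pseudo-Boolean case.

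The positive direction is essentially routine once the factored/parity/even-symmetric forms are in hand. The crux, and the step I expect to be the main obstacle, is the negative direction: one must locate a genuine zero of $\texttt{FE}_c$ sitting in an orthant whose sign pattern \emph{violates} the constraint, i.e.\ a zero strictly off the Boolean cube. The two delicate points are computing the cardinality expansion correctly under the $-1=\texttt{True}$ convention and verifying that the offending root $t=2$ truly lands in the violating (all-positive) orthant; for general-arity \texttt{NAE} the only nontrivial estimate is converting the ``all same sign'' hypothesis into strict positivity via the magnitude bound, which I would isolate as a short lemma.
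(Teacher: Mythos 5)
Your proposal is correct and follows essentially the same route as the paper: for \texttt{OR}, \texttt{XOR}, and \texttt{NAE} you analyze the zero set of the factored/parity/even-symmetric closed forms exactly as the appendix does (your magnitude bound for \texttt{NAE} just fills in a positivity claim the paper asserts without proof), and for the negative half you exhibit an off-cube zero in a violating orthant, using $\mathrm{Card}_{\ge 2}(x_1,x_2,x_3)$ at $(2,2,2)$ where the paper uses $x_1\wedge x_2$ at $(3,3)$ --- the same idea with an equally valid witness. No gaps.
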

It is desirable to have a classification of all types of constraints regarding rounding friendly. In the following we give a necessary condition rounding friendly constraints.

\begin{definition} (Isolated violations) A Boolean constraint $c$ has isolated violations if there do not exist two Boolean assignments $b_1,b_2$ with hamming distance 1, both violate $c$.
\end{definition}

A constraint with isolated violations can be ``easily fixed" since flipping the value of an arbitrary variable makes the constraint satisfied. It is easy to verify that \texttt{XOR}, \texttt{OR} clauses, and \texttt{Not-All-Equal} have isolated violations. General cardinality constraints and pseudo-Boolean constraints are not with isolated violations.   We show that having isolated violations is a necessary but insufficient condition for rounding friendly.


\begin{theorem}
\label{prop:left}
If a constraint is rounding-friendly, then all violations of the constraint are isolated.
\end{theorem}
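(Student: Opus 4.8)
The plan is to prove the contrapositive: assuming the violations of $c$ are not isolated, I would show that $c$ is not rounding-friendly. The key simplification is that it suffices to exhibit a \emph{single} point $l^{\ast}\in\mathbb{R}^n$ with $\texttt{FE}_c(l^{\ast})=0$ whose rounding $\texttt{sgn}(l^{\ast})$ violates $c$. Indeed, since $0\le\epsilon$ for every $\epsilon\ge 0$, such an $l^{\ast}$ is a counterexample to the rounding-friendly condition \emph{simultaneously for all} $\epsilon\ge 0$: the hypothesis $|\texttt{FE}_c(l^{\ast})|\le\epsilon$ holds yet the conclusion fails. Hence no choice of $\epsilon$ can witness rounding-friendliness, which is exactly what ``not rounding-friendly'' means.

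To build this witness, I would first extract the structure forced by non-isolation. By hypothesis there are Boolean assignments $b_1,b_2$ at Hamming distance $1$, both violating $c$; say they differ only in coordinate $i$, with common values $a$ on the remaining coordinates (I write $x_{-i}$ for the vector of all coordinates except $x_i$). Splitting the multilinear polynomial on $x_i$, I write $\texttt{FE}_c(x)=g(x_{-i})+x_i\,h(x_{-i})$ with $g,h$ multilinear in the other variables. Evaluating at $b_1$ and $b_2$, where $x_i=\pm 1$ and $x_{-i}=a$, and using that $\texttt{FE}_c$ equals $1$ at every violating assignment, I obtain $g(a)+h(a)=1$ and $g(a)-h(a)=1$, hence $g(a)=1$ and $h(a)=0$. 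Geometrically, $\texttt{FE}_c$ is flat (constantly $1$) along the entire axis-parallel line through $a$ in the $x_i$-direction, so moving along that line alone never reaches value $0$.

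To reach value $0$ while still rounding to a violation, I would perturb transversally. Because $h\not\equiv 0$ (the constraint genuinely depends on $x_i$), the real zero set of $h$ is a proper algebraic variety with empty interior, so its complement is dense; I can therefore choose $y$ arbitrarily close to $a$ with $h(y)\neq 0$, and, since each $a_j\in\{\pm 1\}$ is bounded away from $0$, close enough that $\texttt{sgn}(y)=a$. Setting $l^{\ast}_{-i}=y$ and $l^{\ast}_i=-g(y)/h(y)$ gives $\texttt{FE}_c(l^{\ast})=g(y)+l^{\ast}_i\,h(y)=0$, while $\texttt{sgn}(l^{\ast})$ equals $a$ on every coordinate but $i$ and is $\pm 1$ on coordinate $i$, so it is either $b_1$ or $b_2$ and hence violates $c$. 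This is the desired $l^{\ast}$.

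The main obstacle is precisely the flatness identity $h(a)=0$: the obvious move along the flip direction is useless, and one is forced to ``unflatten'' the $x_i$-dependence by perturbing the other coordinates, which is possible only when $\texttt{FE}_c$ genuinely depends on $x_i$, i.e.\ $h\not\equiv 0$. I therefore expect the delicate point to be the degenerate case where $x_i$ is an irrelevant variable ($h\equiv 0$): there the violating pair exists for trivial reasons and the transversal construction breaks, so the statement should be understood with each constraint taken over its true support (the variables it actually depends on). Under that reading the flip coordinate is relevant, $h\not\equiv 0$, and the density argument applies; the only remaining routine checks are that the perturbation preserves the sign pattern $a$ and that a nonzero multilinear polynomial has a dense non-vanishing set.
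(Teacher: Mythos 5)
Your proof is correct and takes a genuinely different route from the paper's. The paper also argues the contrapositive via a single zero of $\texttt{FE}_c$ that rounds to a violation, but it gets there by a chain of partial restrictions $c_0=c,\ c_1=c|_{x_1\gets b},\dots$, locating the first index $k$ at which the restriction becomes identically \texttt{False}, Shannon-expanding to write $\texttt{FE}_{c_{k-1}}=(x_k+1)\,g+1$, invoking an auxiliary lemma that a non-constant multilinear polynomial attains every real value to pick $a'$ with $g(a')=1$, and setting $x_k=-2$; the rounded point then lands inside an all-false subcube, so the signs of the uncontrolled coordinates $a'$ do not matter. You instead split $\texttt{FE}_c$ directly on the flip coordinate, extract the flatness identity $g(a)=1,\ h(a)=0$, and perturb the remaining coordinates within the sign-preserving ball to make $h\neq 0$ before solving for the flip coordinate --- so your rounded point is literally $b_1$ or $b_2$. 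Your version is shorter, needs only the elementary fact that a nonzero polynomial does not vanish on an open set, and avoids the restriction-chain machinery entirely; the paper's version buys the freedom of not having to control the signs of the far coordinates, at the cost of that machinery (and it quietly glosses over the sub-case where $g$ is a nonzero constant, where ``choose $a'$ with $g(a')=1$'' needs modification). Your explicit flagging of the degenerate case $h\equiv 0$ is also a genuine contribution rather than a weakness: the theorem as literally stated fails for a constraint padded with an irrelevant variable (e.g.\ $x_1\vee x_2$ viewed over three variables is rounding-friendly yet has non-isolated violations obtained by flipping $x_3$), so the support convention you propose is exactly the hypothesis both proofs implicitly need, and the paper never states it.
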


Although \texttt{OR} clauses, XOR and NAE are both rounding friendly and with isolated violations, the inverse of Theorem \ref{prop:left} is generally not true. The following example showcases that there can be constraints with isolated violations but not rounding friendly. 
\begin{example}
    Consider the constraint $c$ of length $3$ with violation set $\{\texttt{FFF},\texttt{FTT},\texttt{TFT}\}$ with Fourier Expansion $\texttt{FE}_c=\frac{3}{8}+\frac{1}{8}x_1+\frac{1}{8}x_2-
    \frac{1}{8}x_3+\frac{1}{4}x_1x_2x_3$. All the zeros of $c$ is given by 
    $x_1x_2\neq \frac{1}{2}, x_3=\frac{-x_1-x_2-3}{2x_1x_2-1}.$ Then let $(0.1,0.1,\frac{160}{49})$ is a zero of $\texttt{FE}_c$ but rounding it gives $\texttt{FFF}$, which does not satisfy the constraint.
\end{example}

Next, we also show that \texttt{OR}s, XORs, and NAEs are $\epsilon$-rounding-friendly with positive $\epsilon$.



\begin{proposition}
\label{prop:CNFW}
An \texttt{OR} clause of length $k$ is  $\frac{1}{2^k}$-rounding-friendly. An XOR constraint is $\frac{1}{2}$-rounding-friendly. An NAE of length $k$ is $\frac{1}{2^{k-1}}$-rounding-friendly.
\end{proposition}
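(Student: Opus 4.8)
The plan is to argue by the contrapositive of the rounding-friendly condition: for each constraint type I will show that whenever $\texttt{sgn}(l)$ \emph{violates} the constraint, the magnitude $|\texttt{FE}_c(l)|$ is at least the claimed threshold, so that any $l$ whose Fourier value falls below that threshold must round to a satisfying assignment. The common engine is that the Fourier expansions of these three families factor into products, and these products interact transparently with the sign pattern of $l$. I will first reduce to the all-positive-literal case via the substitution $y_i=\sigma_i x_i$, where $\sigma_i=+1$ if literal $i$ is positive and $\sigma_i=-1$ if it is negated; this sign relabeling preserves both $|\texttt{FE}_c|$ and the rounding behavior (negating a literal exactly flips the corresponding coordinate's sign), so no generality is lost.

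For an \texttt{OR} clause of length $k$ I will use the factorization $\texttt{FE}_c(x)=2^{-k}\prod_{i=1}^{k}(1+x_i)$, the pattern already visible for $k=2$ in Table~\ref{ex_FE}. The clause is violated by $\texttt{sgn}(l)$ precisely when every coordinate rounds to false, i.e.\ $l_i\ge 0$ for all $i$; then each factor satisfies $1+l_i\ge 1$, so $\texttt{FE}_c(l)\ge 2^{-k}=\tfrac{1}{2^k}$. For an XOR I will use $\texttt{FE}_c(x)=\tfrac12\bigl(1+\prod_{i=1}^{k}x_i\bigr)$ after reducing to positive literals (so the target parity is fixed). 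A violating rounding corresponds to $\prod_i\texttt{sgn}(l_i)=+1$, i.e.\ an even number of negative coordinates; checking the sign contributed by each nonzero coordinate shows $\prod_i l_i\ge 0$, whence $\texttt{FE}_c(l)\ge\tfrac12$.

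The genuinely nontrivial case is NAE. Here I will first record the factorization $\texttt{FE}_c(x)=2^{-k}\bigl(\prod_{i}(1+x_i)+\prod_{i}(1-x_i)\bigr)$, which one verifies reproduces the $k=3$ row of Table~\ref{ex_FE}. A violating rounding means all coordinates round to the same Boolean value; by the sign symmetry I may assume they all round to false, i.e.\ $l_i\ge 0$ for all $i$. The key algebraic step is the identity $\prod_{i}(1+a_i)+\prod_{i}(1-a_i)=2\sum_{S\subseteq[k],\,|S|\text{ even}}\prod_{i\in S}a_i$, which holds because the odd-cardinality monomials cancel between the two products while the even ones double. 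With all $a_i=l_i\ge 0$, every summand is nonnegative and the empty set contributes $1$, so the bracket is at least $2$ and therefore $\texttt{FE}_c(l)\ge 2^{1-k}=\tfrac{1}{2^{k-1}}$.

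The main obstacle will be the even-cardinality cancellation identity underlying the NAE bound, together with the sharpness issue common to all three cases: the thresholds are attained (for instance at $l=0$ for \texttt{OR} and NAE, and at points with a zero coordinate for XOR), so the inequalities are tight and the stated $\epsilon$ values are exactly the critical thresholds. Consequently I will phrase the rounding-friendly conclusion with care at these degenerate configurations, noting that the bound $|\texttt{FE}_c(l)|\ge\epsilon$ is strict away from them, which is what delivers the reduction for every $l$ strictly below the threshold.
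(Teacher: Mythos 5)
Your proof is correct and follows essentially the same route as the paper's: reduce to positive literals, use the product factorizations of $\texttt{FE}$ for \texttt{OR}, XOR, and NAE, and bound the value coordinate-by-coordinate via signs (you phrase it as the contrapositive, the paper phrases it directly, but the content is identical). Two points where you are actually more careful than the paper: you supply the even-cardinality cancellation identity that justifies the NAE bound, which the paper merely asserts, and you correctly flag that the thresholds are attained at degenerate points (e.g.\ $l=0$), so the stated $\epsilon$'s only work with strict inequality $|\texttt{FE}_c(l)|<\epsilon$ --- the paper's own proof silently uses $<$ even though its definition of rounding-friendly is written with $\le$.
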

By Proposition \ref{prop:CNFW}, we can bound the number of violated constraints by the value of the unconstrained formulation in MaxSAT problem. 

\begin{corollary}
\label{coro:bound}
 Suppose $F_{C,\alpha=0}^{sq}(l)=W$ for $l\in \mathbb{R}^n$. If the formula  $f$ is k-CNF , then \texttt{sgn}$(l)$ violates at most $\ceil{4^k\cdot W}$ constraints. If $f$ is pure-\texttt{XOR}, then the rounded assignment violates at most $\ceil{4W}$ constraints. If $f$ is k-\texttt{NAE}, then \texttt{sgn}$(l)$ violates $\ceil{4^{k-1}W}$ constraints.
\end{corollary}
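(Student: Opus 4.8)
The plan is to bound the number of violated constraints by combining the per-constraint rounding-friendliness bounds from Proposition~\ref{prop:CNFW} with the structure of the square formulation. First I would observe that since $\alpha=0$, we have $F_{C,\alpha=0}^{sq}(l) = \sum_{c\in C}\texttt{FE}_c^2(l) = W$, so each individual term $\texttt{FE}_c^2(l)$ is non-negative and they sum to $W$. The key idea is a counting argument: a constraint $c$ is violated by $\texttt{sgn}(l)$ only if $|\texttt{FE}_c(l)|$ exceeds the $\epsilon$-rounding-friendliness threshold for that constraint type, since by definition $|\texttt{FE}_c(l)|\le\epsilon$ would force $\texttt{sgn}(l)$ to satisfy $c$.

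For the $k$-CNF case, each clause is $\frac{1}{2^k}$-rounding-friendly, so any violated clause must have $|\texttt{FE}_c(l)| > \frac{1}{2^k}$, hence $\texttt{FE}_c^2(l) > \frac{1}{4^k}$. If $V$ denotes the number of violated constraints, then summing only over the violated ones gives $W = \sum_{c\in C}\texttt{FE}_c^2(l) \ge \sum_{c \text{ violated}} \texttt{FE}_c^2(l) > V\cdot\frac{1}{4^k}$, which rearranges to $V < 4^k\cdot W$, and since $V$ is an integer, $V \le \ceil{4^k\cdot W}$. The XOR and NAE cases follow identically using the thresholds $\frac{1}{2}$ and $\frac{1}{2^{k-1}}$, yielding $\texttt{FE}_c^2(l) > \frac{1}{4}$ and $\texttt{FE}_c^2(l) > \frac{1}{4^{k-1}}$ respectively, and thus the stated bounds $\ceil{4W}$ and $\ceil{4^{k-1}W}$.

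One subtlety I would need to handle carefully is the boundary of the strict-versus-weak inequality in the definition of $\epsilon$-rounding-friendly. The definition guarantees satisfaction when $|\texttt{FE}_c(l)|\le\epsilon$, so a \emph{violated} constraint must satisfy the strict inequality $|\texttt{FE}_c(l)| > \epsilon$; this strictness is what lets the ceiling be an upper bound rather than forcing a $+1$ correction, and I would verify that the ceiling function correctly absorbs the integrality of $V$ at equality. The main obstacle, though minor, is confirming that Proposition~\ref{prop:CNFW} supplies exactly the thresholds needed and that no additional slack is lost in passing from the linear bound on $|\texttt{FE}_c(l)|$ to the quadratic bound on $\texttt{FE}_c^2(l)$; since squaring a strict inequality between non-negative quantities preserves strictness, this transition is clean and the argument reduces entirely to the elementary counting step above.
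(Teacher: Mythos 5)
Your proposal is correct and follows essentially the same route as the paper: a Markov-style counting argument that any violated constraint must have $|\texttt{FE}_c(l)|$ above the threshold from Proposition~\ref{prop:CNFW}, hence $\texttt{FE}_c^2(l)$ above $\epsilon^2$, so at most $W/\epsilon^2$ constraints can be violated. Your treatment of the strict-versus-weak inequality at the threshold is in fact slightly more careful than the paper's, which writes the cutoff as $\texttt{FE}_c(l)\ge\epsilon$; no substantive difference otherwise.
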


In general, including a penalty term is necessary for soundness in reductions from SAT to unconstrained optimization. However, for certain constraints—such as XOR, OR, or NAE, it can be omitted without sacrificing soundness. Moreover, we show that only constraints with isolated violations can safely omit penalty terms, though the converse does not always hold.

\section{Implementation Details} 
\subsection{Gradient Computation}
In GradSAT \cite{gradsatAAAI}, a follow-up work of FourierSAT with accelerated gradient computation, the gradient of $\texttt{FE}_c$, i.e., $\frac{\partial\texttt{FE}_c}{\partial X}$ of a CNF clause, XOR, NAE and a cardinality constraint with length $k$ can be computed in $O(k^2)$. The gradient of $\texttt{FE}_c^2$ or $|\texttt{FE}_c|$ can be computed by the chain rule as 
$$
\frac{\partial \texttt{FE}_c^2}{\partial x_i}(l) = 2\cdot \texttt{FE}_c(l)\cdot \frac{\partial\texttt{FE}_c}{\partial x_i}(l)
,\,\,\,\frac{\partial|\texttt{FE}_c|}{\partial x_i}(l) = \texttt{sgn}(\texttt{FE}_c(l))\cdot \frac{\partial\texttt{FE}_c}{\partial x_i}(l) (l_i\neq 0),
$$
where \texttt{sgn}$(M)=1$ for $M>0$ and $-1$ otherwise.
Unlike the absolute-value formulation, the square formulation is everywhere differentiable and provides a smoother landscape for optimization.  Moreover, in the square formulation, the gradient contributed by a constraint $c$ is discounted by the value of $\texttt{FE}_c$. Therefore, when the Fourier expansion of a constraint $c$ approximately evaluates to $0$ (intuitively $c$ is close to be satisfied), the gradient provided by $c$ becomes insignificant, which can be viewed as a self-adaptive weighting procedure, similar to Langrange methods~\cite{nocedal2006numerical}.

\subsection{Unconstrained Continuous Optimizers}
In FourierSAT, the continuous optimizers are required to admit box constraints with the form $-1\le x\le 1$. The method in this chapter lifts this requirement and enhances FourierSAT with unconstrained solvers. In the evaluation, we include the following set of continuous optimizers: 
\begin{itemize}
    \item  (Projected) gradient descent (PGD, constrained/GD, unconstrained). In PGD, if a point $l\in\mathbb{R}^n$ is reached outside $[-1,1]^n$, then the search returns to the point in $[-1,1]^n$ with minimum distance to $l$. 
    \item Sequential least squares programming, SLSQP (Constrained/Unconstrained), a second-order method that was shown to have the best performance compared to BFGS and CG in \cite{pellow2021comparison}. 
    \item Adam (Unconstrained) \cite{kingma2014adam}. One of the most popular optimizers used in neural network training. As a gradient-based approach, Adam integrates the momentum-based techniques~\cite{zou2019sufficient} techniques to accelerate the convergence and escape from local optimum.
    \item Hamilton-Jacobi-based Proximal Point (HJ-PROX)~\cite{osher2023hamilton}. A proximal-point method where the proximal operator is approximated using Laplace's Method~\cite{tibshirani2024laplace, meng2025recent, heaton2024global}. 
\end{itemize}


\section{Experiment Results}
We implement the approaches in this chapter on the code-base of FourierSAT \footnote{Available at \url{https://github.com/zzwonder/FourierSAT}.} and conduct evaluations to answer the following research questions. 

\noindent\textbf{RQ1}: How does the penalty coefficient ($\alpha$) affect the performance for unconstrained optimization? 

\noindent\textbf{RQ2}: How do box constraints affect the performance? 

\noindent\textbf{RQ3}: What benefits can unconstrained optimization formulation bring us for solving hybrid SAT problems?

Each experiment was run on an exclusive node in a Linux cluster with 16 processors at 2.63 GHz and 4 GB of RAM per processor. We implement GD/PGD with step size $10^{-3}$. 
We use the implementation of Adam from \cite{kingma2014adam}. The implementation of HJ-PROX is from \cite{osher2023hamilton}. For SLSQP, we use the implementation from the Scipy package~\cite{virtanen2020scipy}. The time limit for each formula is set to 300 seconds.

\subsection{Benchmarks} Let $n$ be the number of variables, $m$ be the number of constraints. The benchmarks consist of the following categories.

\noindent\textbf{Random 3-CNF} $n=1000$, 100 random formulas for each $\frac{m}{n} = \{1.0,1.1, \cdots, 3.6\}$.

\noindent\textbf{Random 2-XOR} $n=1000$, 100 random formulas for each $\frac{m}{n} = \{0.1,0.2,\cdots,1.0\}$. 

\noindent\textbf{Random Cardinality constraints} For $n\in\{50,100,150\}$, $r_P\in\{0.5,0.6,0.7\}$, $r_V\in\{0.2,0.3,0.4,0.5\}$, $100$ formulas consisting of $r_P\cdot n$ cardinality constraints where each cardinality constraint contains  $r_V\cdot n$ variables randomly sampled from $\{x_1,\cdots,x_n\}$ are generated. The comparator is uniformly sampled from $\{\ge, \le\}$ and the right-hand side is set to $\frac{r_V\cdot n}{2}$.

\subsection{Results and Analysis}
\subsubsection{RQ1: Effect of penalty terms} 
We evaluate the effect of the penalty terms for random 3-CNF and 2-XOR constraints, which in theory do not need penalty terms for soundness, and cardinality (CARD) constraints, which generally need penalty terms. Results are shown in Figure~\ref{fig:cnf_barrier}.


    
\begin{figure}[!ht]
  \hspace*{-1cm}
    \begin{tabular}{ccc}
        (A) SLSQP-SQ-rand3CNF & (B) SLSQP-SQ-rand2XOR & (C) SQ-CARD
        \\
        \includegraphics[width=.35\linewidth]{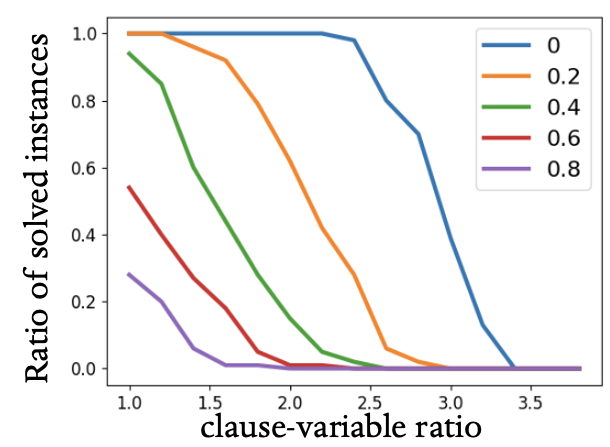}
        & 
        \includegraphics[width=.36\linewidth]{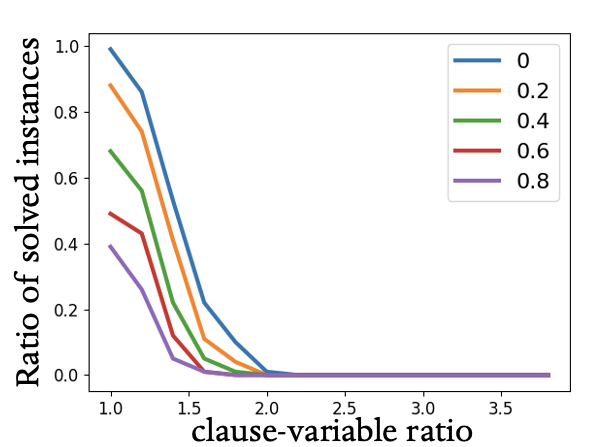}
        & 
        \includegraphics[width=.35\linewidth]{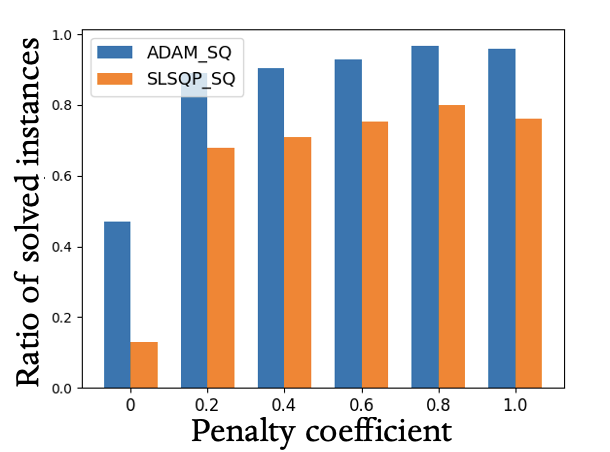}
        \\
    \end{tabular}
    \caption{\small{Effect of the penalty coefficient $\alpha\in\{0,0.2,0.4,0.6,0.8\}$ on (A) random 3-CNF on the square formulation and optimizer SLSQP. (B) random 2-XOR on the square formulation and SLSQP. (C) random CARD on the square formulation and optimizer SLSQP and Adam.}}
    \label{fig:cnf_barrier}
\end{figure}

We observe that a non-zero penalty coefficient degrades solver performance on random 3-CNF and 2-XOR constraints, which theoretically do not require penalty terms (Figure 1A, B). The larger the coefficient, the worse the performance, and this effect appears in both constrained and unconstrained settings. Conversely, for constraints that do require penalty terms (such as cardinality constraints), introducing a non-zero penalty coefficient significantly improves solver performance (Figure 1C).

\subsubsection{RQ2: Effect of different formulations and box constraints} 
We evaluate different formulations (linear/square/absolute value, constrained/unconstrained) and compare their performance on the same optimizer PGD/GD. The results for random 3-CNF and random 2-XOR problems are presented in Figure \ref{fig:formulations_CNFXOR}A,B. 

We observe that the continuous formulation is critical to the performance. For both random CNF and XOR problems, the square formulation outperforms absolute-value and linear for both optimizers (GD, SLSQP), which could be possibly explained by the smoothness and auto-adaptive property of the square formulation. Allowing an unconstrained setting improves the performance of GD on square formulation in both random CNF and XOR benchmarks, indicating that allowing ``moving out of the box" could be beneficial. 

\begin{figure}[!ht]
    \centering
      \hspace*{-1cm}
    \begin{tabular}{ccc}
        (A) GD-rand3CNF & (B) GD-rand2XOR & (C) multiple-solvers-rand3CNF
        \\
        \includegraphics[width=.33\linewidth]{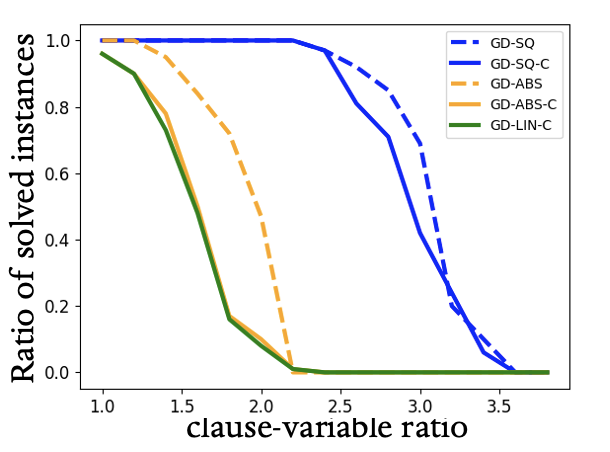} \label{subfig:form1}
        & 
        \includegraphics[width=.33\linewidth]{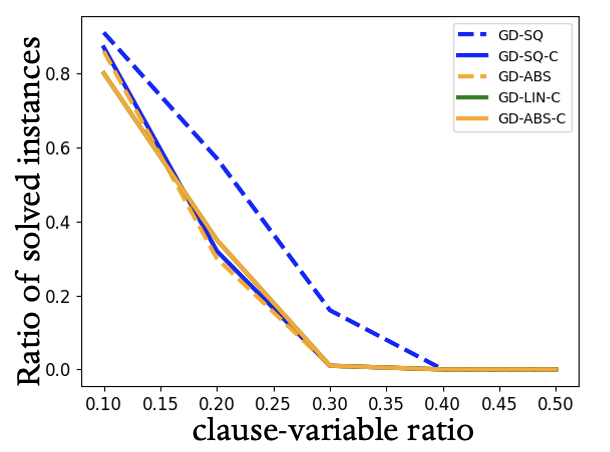} \label{subfig:form2}
        &
        \includegraphics[width=.33\linewidth]{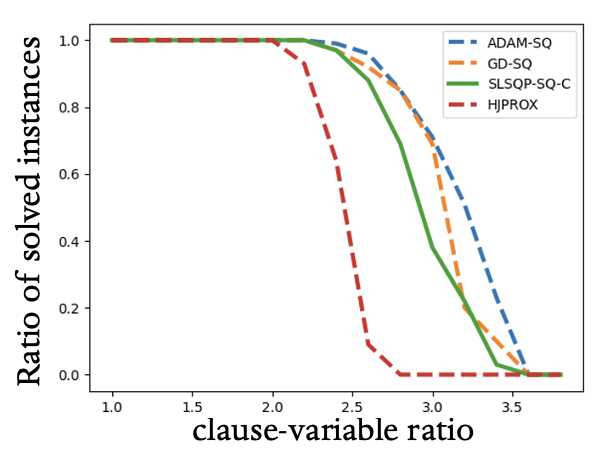} \label{subfig:form2}
    \end{tabular}
    \caption{\small{(A) Performance of different formulations on GD/PGD in random 3CNF formulas.         ``-c" indicates constrained formulation. (B) Performance of different formulations on GD/PGD in random 2XOR formulas. The square formulation outperformed the other two formulations (abs, linear). Allowing an unconstrained setting improves the performance of GD on square formulation in both random CNF and XOR benchmarks. (C) Performance on multiple solvers in their best formulation on random 3CNF. Adam improves the portfolio of the previous set of continuous optimizers.}}
    \label{fig:formulations_CNFXOR}
\end{figure}

 \subsubsection{RQ3: Benefit of allowing unconstrained continuous optimizers}
To see whether accepting unconstrained optimizers enhances the solver portfolio, we compare the performance of different optimizers on their best formulations on random 3-CNF formulas. The results are shown in Figure \ref{fig:formulations_CNFXOR}C. Adam on the square formulation is able to solve random 3-CNF with higher clause-variable ratio than all solvers previously used in FourierSAT, showcasing the potential of using machine-learning-inspired optimizer in SAT solving.



    
\section{Conclusion and Future Directions}
In this paper, we explore unconstrained optimization for Boolean SAT problems by applying non-negative operators on Fourier expansions, thus eliminating box constraints and enabling integration of diverse continuous optimizers into FourierSAT. 
Our experiments demonstrate that optimization problem formulation is crucial, with certain unconstrained optimizers significantly outperforming constrained ones. This framework is positioned to benefit from advances in continuous optimization, specialized hardware \cite{cen2023massively}, and learned optimization algorithms \cite{chen2022learning, li2017learning, heaton2023explainable, mckenzie2024three}. Future work can explore alternative formulations from other domains like cross-entropy methods, with the main challenge being the mitigation of numerical stability issues.


\section*{Acknowledgements}
Work supported in part by NSF grants IIS-1527668, CCF-1704883, IIS-1830549, DoD MURI grant N00014-20-1-2787, Andrew Ladd Graduate Fellowship of Rice Ken Kennedy Institute, and an award from the Maryland Procurement Office. 

\bibliography{ref.bib}

\appendix

\section{Proofs}\label{sec:proofs}
We restate each of the statements to be proven for ease of readability.

\subsection{Proof of Proposition~\ref{prop:sq}}\label{app:proof_sq}
\noindent\textbf{Proposition~\ref{prop:sq} (restated).}
\textit{A hybrid formula $f$ is satisfiable iff $\mathop{min}\limits_{x\in \mathbb{R}^n}F^{sq}_{C,\alpha}(x)=0$ for $\alpha>0$. Moreover if $l\in \mathbb{R}^n$ and $F_{C,\alpha}^{sq}(l)=0$, then \texttt{sgn}$(l)$ satisfies $f$.}

\begin{proof}
    $\Rightarrow$: Suppose $f$ is satisfiable. Then there exists an assignment $b\in\{-1,1\}^n$ such that $\texttt{FE}_c(b)=0$ for all $c\in C$. Therefore, $F_{sq,\alpha}(b)=0$. Since $F^{sq}_{C,\alpha}(x)\ge 0$ for all $x\in\mathbb{R}^n$, it follows that $\mathop{min}\limits_{x\in \mathbb{R}^n}F_{C,\alpha}^{sq}(x)=0$.

 \noindent   $\Leftarrow$: Suppose $\mathop{min}\limits_{x\in \mathbb{R}^n}F^{sq}_{C,\alpha}(x)=0$. Then there exists a real vector $l\in \mathbb{R}^n$ such that $F^{sq}_{C,\alpha}(l)=0$. Since $F^{sq}_{C,\alpha}$ is a sum-of-squares formulation, we have $\texttt{FE}_c(l)=0$ for all $c\in C$, and $(l_i^2-1)=0$ for all $i\in [n]$. Therefore, $l\in\{-1,1\}^n$ is a Boolean assignment. By the definition of Fourier expansion of Boolean functions, the assignment $l$ satisfies the formula $f$, thus proving that $f$ is satisfiable.
\end{proof}

\subsection{Proof of Corollary~\ref{coro:rounding-friendly}}\label{app:proof_rounding_friendly}

\noindent\textbf{Corollary~\ref{coro:rounding-friendly} (restated).}
\textit{If all constraints of a formula $f$ are rounding-friendly, then
$f$ is satisfiable iff $\mathop{min}\limits_{x} F^{sq}_{C, \alpha=0}(x) = 0$, i.e., the penalty term is not necessary for the reduction from SAT to unconstrained optimization.}

\begin{proof}
Suppose all constraints in the formula are rounding-friendly. We prove that $f$ is satisfiable iff $\mathop{min}\limits_{x} F_{sq, \alpha=0}(x) = 0$ as follows:

$\Rightarrow$: Same argument as in the proof of Proposition~\ref{prop:sq}.

$\Leftarrow$: If $\mathop{min}\limits_{x} F_{sq, \alpha=0}(x) = 0$, there exists a real vector $l\in \mathbb{R}^n$ such that $F_{sq, \alpha=0}(l)=0$. Since $F_{sq, \alpha=0}$ is a sum-of-squares, we have $FE_c(l)=0$ for all $c\in C$. Since each $c\in C$ is rounding-friendly, it follows that $\texttt{sgn}(l)$ is well-defined and is a Boolean assignment that satisfies all constraints. Hence, $f$ is satisfiable.
\end{proof}

\section{Proof of Proposition~\ref{prop:eg_roundingfirendly}}\label{app:proof_roundingfriendly_examples}

\noindent\textbf{Proposition~\ref{prop:eg_roundingfirendly} (restated).}
\textit{XOR, disjunctive clauses (CNF), and not-all-equal (NAE) are $0$-rounding-friendly, while general pseudo-Boolean constraints are not $\epsilon$-rounding-friendly for all $\epsilon\ge 0$.}

\begin{proof}
Without loss of generality, we assume all constraints do not contain negative literals. This is because the negation of a Boolean variable of a constraint is equivalent with negating the value of that variable in Fourier expansions, i.e., 

$$
\texttt{FE}_{c(\neg x_1,\cdots,x_n)}(l_1,\cdots,l_n) = \texttt{FE}_{c(x_1,\cdots, x_n)}(-l_1,\cdots,l_n).
$$
\begin{itemize}
    \item 
    \textbf{CNF clause (OR)}: the Fourier expansion of an clause is
    $$\texttt{FE}_{OR(x_1,\cdots,x_k)}(X) = 1 + \prod_{i\in[k]} \frac{(1+x_i)}{2}$$ 
    
    Therefore $\texttt{FE}_{OR(x_1,\cdots,x_k)}(l)=0$ iff at least element in $l$ takes value $-1$. Rounding such a real point gives an assignment with a positive literal, which satisfies the clause.

\item     \textbf{XOR}: The Fourier expansion of an XOR constraint is $$\texttt{FE}_{XOR(x_1,\cdots,x_k)}=\frac{1+\prod_{i\in [k]}x_i}{2}.$$

If $\texttt{FE}_{XOR(x_1,\cdots,x_k)}(l)=0$, then it means that the product of all literals is negative, rounding which gives an assignment satisfying the XOR constraint.

\item    \textbf{NAE}: The Fourier expansion of an NAE constraint is

\begin{equation*}
\begin{split}
\texttt{FE}_{NAE(x_1,\cdots,x_k)}&=\frac{(1+x_1)\cdots(1+x_k)+(1-x_1)\cdots(1-x_k)}{2^k}\\
&=\frac{2+2x_1x_2+\cdots+2x_{k-1}x_k+2x_1x_2x_3x_4+\cdots}{2^k}.  
\end{split}    
\end{equation*}

If all elements in a real vector $a$ have the same sign, then $\texttt{FE}_{NAE}(l)>0$. Therefore, if $\texttt{FE}_{NAE}(l)=0$, there must exist at least two variables with different value sign. Rounding such a real vector gives a solution. 
   
\item     \textbf{Pseudo-Boolean constraints}: Consider the example  $c=x_1\wedge x_2$ (can be also viewed as a cardinality constraint). Its Fourier expansion is
$$
  \text{FE}_{\mathrm{AND}}(x_1,x_2)=
  \tfrac{3}{4}+
  \tfrac{1}{4}x_1+
  \tfrac{1}{4}x_2-
  \tfrac{1}{4}x_1 x_2.
$$
 $l=(3,3)$ yields
$
  \text{FE}_{\mathrm{AND}}(3,3)
  =\tfrac{3}{4} + \tfrac{3}{4} + \tfrac{3}{4} - \tfrac{9}{4}
  = 0.
$
However, rounding $(3,3)$ to $(+1,+1)$ corresponds to $(\texttt{False},\texttt{False})$, which does not satisfy $x_1\land x_2.$
\end{itemize}
\end{proof}

\subsection{Proof of Theorem~\ref{prop:left}}\label{app:proof_left}

\noindent\textbf{Theorem~\ref{prop:left} (restated).}
\textit{If a constraint is rounding-friendly, then it must have isolated violations.}

Before proving Theorem~\ref{prop:left}, we define the notation of partial assignment to Boolean functions.

\begin{definition}[partial assignment] For a subset of variables $Y\subseteq X$ and an assignment $b$ of $Y$, i.e., $b: Y\to \{True,False\}$, let $c$ be a constraint, then $c|_{Y\gets b}$ is the partially assigned Boolean constraint where all variables in $Y$ are fixed to the values in $b$.     
\end{definition}
We also need the following proposition for decomposing the Fourier expansion on a variable.
\begin{proposition}[Splitting on a variable] Let $F$ be a multilinear polynomial on a set of variables $X$. Then for each variable $x\in X$, there exists multilinear polynomials $G$ and $H$ on $X\setminus x_i$ such that
$
F(X) = G(X\setminus x_i) x_i + G(X\setminus x_i).
$ 
\label{prop:split}
\end{proposition}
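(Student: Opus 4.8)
The statement is the elementary structural fact that a multilinear polynomial has degree at most one in each individual variable, hence is affine in $x_i$, and so it can be separated into the part carrying $x_i$ and the part free of $x_i$. The plan is simply to produce the two coefficient polynomials explicitly and check they lie in the right ring; I would give both an expansion-based and a calculus-based description, since each makes one half of the claim transparent. (Here I read the intended decomposition as $F = G\,x_i + H$, the $+H$ rather than the misprinted $+G$.)

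Concretely, I would write $F$ in its monomial (Walsh-Fourier) form $F(X)=\sum_{S\subseteq[n]}\widehat{F}(S)\prod_{j\in S}x_j$ and split the index sets according to whether they contain $i$. Setting $G:=\sum_{S\ni i}\widehat{F}(S)\prod_{j\in S\setminus\{i\}}x_j$ and $H:=\sum_{S\not\ni i}\widehat{F}(S)\prod_{j\in S}x_j$, every monomial of $G$ and of $H$ is a product of distinct variables drawn from $X\setminus\{x_i\}$, so both are multilinear polynomials on $X\setminus\{x_i\}$ that do not involve $x_i$. Factoring the single $x_i$ out of each term of the first sum then yields $F=G\,x_i+H$, the claimed decomposition. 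Equivalently, and this is the description I would actually record, one may take $H:=F|_{x_i=0}$ and $G:=\partial F/\partial x_i$: since $F$ is multilinear it is affine in $x_i$, so the derivative $G$ is free of $x_i$, and the affine identity $F=(\partial F/\partial x_i)\,x_i+F|_{x_i=0}$ is exactly $F=G\,x_i+H$.

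There is essentially no obstacle here: the only points needing verification are that $G$ and $H$ are genuinely multilinear and genuinely independent of $x_i$, both of which follow at once from $F$ being linear (degree $\le 1$) in $x_i$. If uniqueness of the pair $(G,H)$ were also wanted, it would follow from the uniqueness of the multilinear representation in Theorem~\ref{FourierTransformation}, since $G=\partial F/\partial x_i$ and $H=F|_{x_i=0}$ are then forced; but the statement asserts only existence, so this is not needed for the proof.
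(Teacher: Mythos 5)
Your proof is correct. The paper actually states Proposition~\ref{prop:split} without any proof (it is used only as an auxiliary fact in the proof of Proposition~\ref{prop:anyRealNumber}), so there is nothing to compare against; your argument --- splitting the monomials of the Walsh--Fourier form according to whether they contain $x_i$, or equivalently writing $F = (\partial F/\partial x_i)\,x_i + F|_{x_i=0}$ --- is the standard and expected justification, and you are also right that the second $G$ in the paper's displayed identity is a misprint for $H$.
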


Finally, we need the following auxiliary proposition, which indicates that the value of a non-constant multilinear polynomial can cover the whole real domain. 

\begin{proposition} If a multilinear polynomial $F$ is not constant, then for every $M\in\mathbb{R}$, there exists a real vector $l\in \mathbb{R}^n$, such that $F(l)=M$.
\label{prop:anyRealNumber}
\end{proposition}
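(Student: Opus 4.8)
The plan is to reduce the statement to a one-dimensional fact: I would identify a single coordinate direction along which $F$ restricts to an affine function with nonzero slope, and since an affine map of one real variable with nonzero slope is a bijection of $\mathbb{R}$ onto $\mathbb{R}$, its restriction attains every target value $M$. The surjectivity of $F$ then follows immediately from the surjectivity of this line restriction.

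First I would exploit non-constancy. Since $F$ is not constant, its multilinear (Walsh-Fourier) expansion contains some monomial $\widehat{F}(S)\prod_{j\in S}x_j$ with $S\neq\emptyset$ and $\widehat{F}(S)\neq 0$. I would pick any index $i\in S$ and apply the splitting proposition (Proposition~\ref{prop:split}) to write $F(X)=G(X\setminus x_i)\,x_i + H(X\setminus x_i)$ for multilinear polynomials $G,H$ on the remaining variables. The coefficient polynomial $G$ collects exactly the monomials of $F$ containing $x_i$, with $x_i$ stripped off; in particular it inherits the nonzero term $\widehat{F}(S)\prod_{j\in S\setminus\{i\}}x_j$, so $G$ is not the zero polynomial.

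Next I would invoke the elementary fact that a nonzero polynomial on $\mathbb{R}^{n-1}$ does not vanish identically, giving a point $a\in\mathbb{R}^{n-1}$ with $G(a)\neq 0$. Freezing every variable except $x_i$ at the values prescribed by $a$, the restriction of $F$ along the resulting line is the one-variable affine map $t\mapsto G(a)\,t + H(a)$, whose slope $G(a)$ is nonzero. Hence, given any $M\in\mathbb{R}$, setting $t=(M-H(a))/G(a)$ and letting $l$ be the vector agreeing with $a$ off coordinate $i$ and equal to $t$ in coordinate $i$ yields $F(l)=M$, as required.

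The only nonroutine steps are the two claims about $G$: that it is genuinely nonzero and that it attains a nonzero value somewhere. The first I would settle by tracking which monomials survive the split, as above; the second is the standard ``a nonzero polynomial is not identically zero'' fact, provable by induction on the number of variables (or by noting that the real zero set of a nonzero polynomial has Lebesgue measure zero). Everything else is the observation that a single-variable line restriction of a multilinear polynomial is affine, which turns the global surjectivity question into a trivial linear solve.
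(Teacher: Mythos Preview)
Your proof is correct and follows essentially the same route as the paper: split $F$ along a variable $x_i$ as $F=x_iG+H$ with $G\not\equiv 0$, pick a point where $G$ is nonzero, and solve the resulting affine equation in $x_i$. Your justification that $G\not\equiv 0$ (by tracking a surviving monomial) and that a nonzero polynomial attains a nonzero value somewhere is slightly more explicit than the paper's, but the argument is the same.
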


\begin{proof}
    Since $F$ is not constant, then by Proposition \ref{prop:split} there exists $i\in[n]$ such that $F$ can be rewritten as 
    $$
    F(X) = x_i\cdot G(X\setminus x_i) + H(X\setminus x_i),
    $$

   where $f(X\setminus x_i)\not\equiv 0$ (otherwise $x$ can be removed from the variable set of $F$). Since  $f(X\setminus x_i)\not\equiv 0$, there exists a real vector $l'\in \mathbb{R}^{n-1}$ such that $f(X\setminus x)(l')\neq 0$. Construct a real vector $l\in \mathbb{R}^n$ which assigns $x_i$ with $\frac{M-G(l')}{H(l')}$ and the rest of variables assigned by $l'$, we have $F(l)=M$. 
\end{proof}

Now we are ready to prove Theorem \ref{prop:left}.
\begin{proof}
First, notice that if a constraint is rounding-friendly for some $\epsilon>0$, then it must be $0$-rounding-friendly. Equivalently, we show that if a constraint has violations with hamming distance $1$, then it is not $0$-rounding-friendly. 

We assume $c$ has at least two variables otherwise the statement is trivial. If $c$ has two violations with hamming distance $1$, then those two violations only differ in the value of one variable, say $x_n$. Let the value of those two violations on $x_1,\cdots,x_{n-1}$ be $b$ such that $c|_{x_1,\cdots,x_{n-1}\gets b}\equiv \texttt{False}$. Consider the sequence of partially assigned constraint, $c_0=c, c_1=c|_{x_1\gets b}, c_2=c|_{x_1,x_{2}\gets b}, \cdots, c_{n-1}=c|_{x_1,x_{2},\cdots, x_{n-1}\gets b}\equiv \texttt{False}$. Since $c_0=c$ is not constant and $c_{n-1}\equiv \texttt{False}$, there exists a minimum $k$ such that $c_{k-1}$ is not constant but $c_k\equiv \texttt{False}$. 

Without loss of generality, we assume $b_k$ = \texttt{True}. By the Shannon expansion of a Boolean function, we have

\begin{equation*}
\begin{split}
c_{k-1} &= (x_k\wedge c_{k-1}|_{x_k\gets \texttt{True}}) \vee (\neg x_k \wedge c_{k-1}|_{x_k\gets \texttt{False}})\\
&=(x_k\wedge c_k)\vee (\neg x_k\wedge c_{k-1}|_{ x_k\gets \texttt{False}})\\
&=\neg x_k\wedge c_{k-1}|_{x_k\gets \texttt{False}}.
\end{split}
\end{equation*}

With the expression above, $\texttt{FE}_{c_{k-1}}$ can be written as $$\texttt{FE}_{c_{k-1}}=(x_k+1)\cdot g(x_{k+1},\cdots,x_{n})+1,$$ 
 where the multilinear polynomial $g=\texttt{FE}_{c_{k-1}|_{x_k\gets 1}}$. Note that $\texttt{FE}_{c_{k-1}}(x_k\gets -1)=1$ as expected.
 Note that $g(x_{k+1},\cdots,x_{n})\not\equiv 0$, otherwise  $c_{k-1}\equiv \texttt{False}$, which contradicts with the assumption that  $k$ is minimum. Then by Proposition \ref{prop:anyRealNumber}, there exists a real vector assignment to $x_{k+1},\cdots, x_n$, denoted by $a'$, such that $g(a')=1$. Then we assign $x_k$ to $-2$ and have 
 
 $$\texttt{FE}_{c_{k-1}}(x_k\gets -2, x_{k+1},\cdots,x_n\gets a')=(-2+1)\cdot 1+1=0.$$ 
 
 Therefore $a'' = (b_1,\cdots,b_{k-1}, -2, a')$ is a solution (zero point) of $\texttt{FE}_{c}$. Rounding this point gives $b'=\texttt{sgn}(a'')=(b_1,\cdots,b_{k-1}, -1, \texttt{sgn}(a'))$.  Since $b_k=\texttt{True}$, i.e., $-1$, then we have $c(b')=c_{k}(\texttt{sgn}(a'))=\texttt{False}$, which means rounding a solution $a''$ of $\texttt{FE}_c$ in the real domain gives an assignment $b'$ which violates $c$. Therefore, $c$ is not rounding-friendly.
\end{proof}

\subsection{Proof of Proposition~\ref{prop:CNFW}}\label{app:proof_CNFW}

\noindent\textbf{Proposition~\ref{prop:CNFW} (restated).}
An \texttt{OR} clause of length $k$ is  $\frac{1}{2^k}$-rounding-friendly. An XOR constraint is $\frac{1}{2}$-rounding-friendly. An NAE of length $k$ is $\frac{1}{2^{k-1}}$-rounding-friendly.

\begin{proof}
We discuss separately for each type of constraints. Without loss of generality, we again assume all literals are positive.

    If $c$ is an OR clause, then its Fourier expansion is $\prod_i (\frac{1+x_i}{2})$. If $|\prod_i (\frac{1+l_i}{2})|< \frac{1}{2^k}$, then there exists $x_i$ such that $\frac{l_i+1}{2}<\frac{1}{2}$, i.e., $l_i<0$. Therefore, rounding the value of $x_i$ gives an assignment that satisfies the constraint.

    If $c$ is an XOR constraint, then its Fourier Expansion is $\frac{1-\prod_i x_i}{2}$. If $\frac{1-\prod_i l_i}{2}<\frac{1}{2}$, then $\prod_i l_i<0$, which indicates that rounding the real assignment gives a solution.

If $c$ is an NAE constraint, then its Fourier expansion is 

\begin{equation*}
\begin{split}
\texttt{FE}_{NAE(x_1,\cdots,x_k)}&=\frac{(1+x_1)\cdots(1+x_k)+(1-x_1),\cdots(1-x_k))}{2^k}\\
&=\frac{2+2x_1x_2+\cdots+x_{k-1}x_k+2x_1x_2x_3x_4+\cdots}{2^k}.
\end{split}
\end{equation*}

If $FE_{NAE}(l)<\frac{1}{2^{k-1}}$, then there must be at least two variables whose value are with different \texttt{sgn}, which indicate that rounding $l$ gives a solution.
\end{proof}

\noindent\textbf{Corollary~\ref{coro:bound} (restated).}
 Suppose $F_{C,\alpha=0}^{sq}(l)=W$ for $l\in \mathbb{R}^n$. If the formula  $f$ is k-CNF , then \texttt{sgn}$(l)$ violates at most $\ceil{4^k\cdot W}$ constraints. If $f$ is pure-\texttt{XOR}, then the rounded assignment violates at most $\ceil{4W}$ constraints. If $f$ is k-\texttt{NAE}, then \texttt{sgn}$(l)$ violates $\ceil{4^{k-1}W}$ constraints.

\begin{proof}
    We prove that in general, if all constraints in the formula are $\epsilon$-rounding-friendly, then $F_{C,\alpha=0}^{sq}(l)=M$ indicates that $\texttt{sgn}(l)$ violates at most $\ceil{\frac{M}{\epsilon^2}}$ constraints.

   Since $F_{C,\alpha=0}^{sq}(l)=M$, then there are at most $\ceil{\frac{M}{\epsilon^2}}$ constraints $c$ s.t. $\texttt{FE}_c(l)\ge\epsilon$. For all constraints $c$ s.t. $\texttt{FE}_c(l)<\epsilon$, they are satisfied due to $\epsilon$-rounding-friendly. Therefore the number of violated constraints is upper-bounded by $\ceil{\frac{M}{\epsilon^2}}$.
\end{proof}

\end{document}